\newcommand*{\Tr}{\operatorname{Tr}}
\newcommand{\norm}[1]{\left\lVert#1\right\rVert}
\providecommand{\myvec}[1]{\ensuremath{\boldsymbol{#1}}}
\providecommand{\xx}{\ensuremath{\myvec{x}}}
\providecommand{\vvartheta}{\ensuremath{\myvec{\vartheta}}}
\providecommand{\calC}{\ensuremath{\mathcal{C}}}
\providecommand{\calD}{\ensuremath{\mathcal{D}}}
\providecommand{\calG}{\ensuremath{\mathcal{G}}}
\providecommand{\calH}{\ensuremath{\mathcal{H}}}
\providecommand{\calO}{\ensuremath{\mathcal{O}}}
\providecommand{\calP}{\ensuremath{\mathcal{P}}}
\providecommand{\calX}{\ensuremath{\mathcal{X}}}
\providecommand{\calY}{\ensuremath{\mathcal{Y}}}
\providecommand{\bbC}{\ensuremath{\mathbb{C}}}
\providecommand{\bbR}{\ensuremath{\mathbb{R}}}
\providecommand{\bbT}{\ensuremath{\mathbb{T}}}
\providecommand{\bbX}{\ensuremath{\mathbb{X}}}
\newtheorem{theorem}{Theorem}
\newtheorem{definition}{Definition}
\newtheorem{proposition}[theorem]{Proposition}
\renewcommand{\vec}[1]{\boldsymbol{#1}}
\begin{document}

\title{Learning complexity gradually in quantum machine learning models}

\newcommand{\ICFO}{ICFO-Institut de Ciencies Fotoniques, The Barcelona Institute of Science and Technology, 08860 Castelldefels (Barcelona), Spain}
\newcommand{\eure}{Eurecat, Centre Tecnològic de Catalunya, Multimedia Technologies, 08005 Barcelona, Spain}
\newcommand{\FU}{Dahlem Center for Complex Quantum Systems, Freie Universität Berlin, 14195 Berlin, Germany}
\newcommand{\hzb}{Helmholtz-Zentrum Berlin f{\"u}r Materialien und Energie, 14109 Berlin, Germany}
\newcommand{\hhi}{Fraunhofer Heinrich Hertz Institute, 10587 Berlin, Germany}

\author{Erik Armengol}
\affiliation{\ICFO}
\affiliation{\eure}

\author{Franz J. Schreiber}
\affiliation{\FU}

\author{Jens Eisert}
\affiliation{\FU}
\affiliation{\hhi}
\affiliation{\hzb}

\author{Carlos Bravo-Prieto}
\affiliation{\FU}

\begin{abstract}
Quantum machine learning is an emergent field that continues to draw significant interest for its potential to offer improvements over classical algorithms in certain areas. However, training quantum models remains a challenging task, largely because of the difficulty in establishing an effective inductive bias when solving high-dimensional problems. In this work, we propose a training framework that prioritizes informative data points over the entire training set. This approach draws inspiration from classical techniques such as \emph{curriculum learning} and \emph{hard example mining} to introduce an additional inductive bias through the training data itself. By selectively focusing on informative samples, we aim to steer the optimization process toward more favorable regions of the parameter space. This data-centric approach complements existing strategies such as warm-start initialization methods, providing an additional pathway to address performance challenges in quantum machine learning. We provide theoretical insights into the benefits of prioritizing informative data for quantum models, and we validate our methodology with numerical experiments on selected recognition tasks of quantum phases of matter. Our findings indicate that this strategy could be a valuable approach for improving the performance of quantum machine learning models.
\end{abstract}

\maketitle

\section{Introduction}

In light of the impressive success of classical machine learning and the ever-improving experimental means to perform \emph{quantum computing}, the study of the potential and limitations of \emph{quantum machine learning} (QML) have become a topic of widespread interest. Quantum systems are known to perform learning tasks that are beyond the capabilities of classical computers \cite{sweke2021quantum,liu2021rigorous,huang2021information,pirnay2023superpolynomial}. However, there is growing evidence that the training of quantum learning models is hindered by severe problems not present in classical machine learning. Recent investigations have highlighted that the cost landscape of quantum learning models contains many poor minima and vast regions of vanishing gradients, so-called \emph{barren plateaus}, which preclude trainability~\cite{ragone2023unified,mcclean2018barren,larocca2024review,anschuetz2022quantum,NoiseBarrenPlateaus, nemkov2024barren,thanasilp2023subtleties}. 

To address these training challenges and enhance the performance of quantum learning models, it is crucial to introduce a favorable \emph{inductive bias}~\cite{kubler2021inductive, cerezo2022challenges, bowles2023contextuality}. Indeed, barren plateaus arise from random initialization in high dimensional problems, where the cost 
landscape becomes flat, and gradients vanish due to the concentration of measure phenomenon. As such, incorporating a well-advised inductive bias is crucial for achieving reasonable performance. In quantum models, inductive bias has typically been introduced through two main strategies: (1) designing specific quantum circuit architectures that exploit problem symmetries or structures~\cite{larocca2022group, meyer2023exploiting, bowles2023backpropagation, park2024hamiltonian, jerbi2024shadows, schatzki2024theoretical, rodriguez2024training}, and (2) employing smart initialization techniques to set the model parameters near to optimal values, often referred to as \emph{warm-starts}~\cite{zhang2022escaping, grimsley2023adaptive, dborin2022matrix, rudolph2023synergistic, mele2022avoiding, puig2024variational, wang2023trainability, park2024hardware, shi2024avoiding}. Both approaches aim to facilitate the optimization process toward favorable regions of the parameter space, thereby improving trainability and convergence.

However, these approaches primarily focus on modifying the model architecture or parameter initialization, while the role of training data in introducing inductive bias has been largely overlooked. Traditionally, training examples are presented to the learner in random order, a practice common in both classical and quantum machine learning. Nonetheless, there is an increasing awareness in classical machine learning that imposing a specific order on the training data can speed up convergence and lead to better local minima. Typically, this ordering is guided by a \emph{scoring function}, where training starts with examples that have low scores and gradually incorporates data with higher scores as the training progresses. This idea has been explored in different forms, the most prominent examples being \emph{curriculum learning} \cite{bengio2009curriculum}, which was recently applied to quantum models \cite{tran2024quantum}, and \emph{hard example mining} \cite{shrivastava2016training}. These strategies primarily differ in their approach to devising scoring functions. 
Hard example mining focuses on presenting the learner with difficult examples, while curriculum learning, inspired by human learning processes, starts with easier examples and progressively introduces more challenging ones. Despite their differences, both strategies have demonstrated success across various classical machine learning domains, including \emph{natural language processing}, \emph{computer vision} and others \cite{wang2021survey,soviany2022curriculum,tudor2016hard,guo2018curriculumnet,jiang2014easy,platanios2019competence,matiisen2019teacher,zhang2018empirical,shrivastava2016training}. Therefore, adopting an agnostic stance on the design principles is essential, as their quality is ultimately determined by their performance in practice.

Inspired by these powerful classical methods, this work explores an approach to incorporate inductive bias into quantum learning models through the quantum data itself. By carefully selecting, structuring, and presenting the quantum data, we aim to steer the optimization process away from unfavorable regions of the parameter space and guide the model toward better generalization and faster convergence. This data-centric perspective offers a complementary strategy to existing methods focusing on circuit design and parameter initialization. Our contributions include developing a unified and intuitive quantum analog of curriculum learning and hard example mining, tailored to the characteristics of quantum models. Through theoretical analysis and empirical validation, we demonstrate that leveraging the structure and presentation of quantum data can improve the trainability and performance of quantum learning models. This approach not only enhances optimization but also opens new avenues for introducing inductive bias in quantum machine learning. More broadly, our work represents a step toward importing sophisticated methods from classical machine learning into the quantum domain, reflecting the growing maturity of the field.\\

\section{Theory}\label{sec:CL}

In this section, we develop the theoretical underpinning of our framework, starting by laying out a number of preliminaries.


\subsection{Learning setup and training procedure}
We consider a supervised learning scenario involving training data consisting of 
a collection of quantum state vectors $
\{
\ket{\psi_i} 
\in \calH \}$ and their respective labels $y_i \in \calY \subseteq \bbR$. Our quantum learning model, represented as $f_{\vvartheta}:\calH \to \calY$, has trainable parameters $\vvartheta = (\vartheta_1, \vartheta_2, \dots)$. The training set is denoted by $\bbX = \{X_i\}_{i=1}^N = \{\ket{\psi_i}, y_i \}_{i=1}^N$. Depending on the specific learning task, the 
collection of quantum state vectors $
\{\ket{\psi_i}
\}$ may naturally arise or encode classical training data through a particular embedding process. Our goal is to identify a set of parameters $\vvartheta$ that minimizes a given loss function $\ell(\vvartheta; (\ket{\psi}, y))$. 
 
Traditionally, the entire training dataset $\bbX$ is accessible from the start of the training. A common approach is to randomize the order of $\bbX$, split it into mini-batches, and perform gradient-based optimization. Variations of this setup exist, but typically, the order in which training data is presented to the learner is not prioritized. Here, we deviate from this conventional setup by initially limiting the learner to a subset of the training data, gradually increasing access to the full dataset. This framework can be visualized in Fig.~\ref{fig:framework}.

\begin{figure}[t]
    	\centering
     \includegraphics[width=0.48\textwidth]{Figure-horizontal-fixed_updated.png}
    	\caption{\justifying
        The training framework begins by scoring quantum data points according to a predefined scoring function (e.g., self-taught, self-paced, or physics-inspired). Data points are sorted by score, and the pacing function determines how much of the dataset is accessible at each training epoch. Mini-batches are then sampled from the available subset, and used to update the model parameters iteratively, enabling a gradual change in learning complexity throughout training.}
    	\label{fig:framework}
\end{figure}

Practically, this is done by introducing a scoring and pacing function. The scoring function determines the order in which data points are added to the subset accessible to the learner by assigning a score $s(\ket{\psi})$ between $0$ and $1$ to each data point $\ket{\psi}$.

\begin{definition}[Scoring function]
Given a dataset $\{\xx_i, y_i\}_{i=1}^N$ with $\xx_i \in \calX$, a scoring function is defined as $s:\calX \to [0,1]$.
\end{definition}

For two data points $\xx_i$ and $\xx_j$, if $s(\xx_i) \leq s(\xx_j)$, then $\xx_j$ should not be available to the learner before $\xx_i$.  
The pacing function essentially dictates the rate at which new data becomes available to the learner by assigning to each epoch a fraction of the training data. Importantly, our framework is also immediately compatible with training on classical data. For example, some methods embed classical data into quantum states, while others, such as data re-uploading~\cite{perez2020data,schuld2021effect}, encode it directly into quantum circuits. Our framework accommodates such approaches, as the scoring function can operate on either the classical data or their corresponding embedding quantum states without conceptual inconsistency.

\begin{definition}[Pacing function]
    Let $\bbT \coloneqq \{1, 2, \dots, T\}$ enumerate the $T$ training epochs. A pacing function is defined as a function $p:\bbT \to [0,1]$.
\end{definition}
Therefore, in epoch $t$, the learner has access to the fraction $p(t)$ of the entire dataset based on the scores of $s(\ket{\psi})$.

\subsection{Scoring and pacing functions}
Scoring functions evaluate the training data, often guided according to notions of difficulty or complexity. While we take an agnostic stance on the idea of scoring functions as difficulty measures, it is noteworthy that those inspired by complexity or difficulty, as previously mentioned, have a successful track record in classical machine learning, often capturing data structures that enable better learning.

We briefly review scoring functions used in practice, distinguishing between \emph{domain-agnostic} and \emph{domain-specific} methods. Domain-agnostic scoring functions, despite their name, are typically tied to the training set and are often learned. This involves initially training a secondary learner on the data and using the resulting loss values as a scoring metric. For example, a learner $f_{\vvartheta}$ may first undergo conventional training, and its hypothesis is subsequently used as a scoring function. Alternative architectures, such as \emph{support vector machines}, can also be adapted in this way. However, this approach can be computationally intensive due to the additional learner.

To mitigate such costs, \emph{self-paced learning} offers a simpler alternative. Here, the loss $\ell_{\vec{\vartheta}}$ of the current hypothesis $f_{\vec{\vartheta}}$ acts as the scoring function ordering the training data of the subsequent epoch. This method preserves the domain-agnostic nature of scoring while avoiding the overhead of training an auxiliary learner.
In contrast, domain-specific scoring functions leverage domain expertise to define metrics, often inspired, as mentioned previously, by complexity measures such as input length in text-based tasks or the number of objects in an image.

Unlike scoring functions, pacing functions dictate the rate at which training examples are introduced. They are generally domain-agnostic and are chosen to be monotonically increasing. Linear pacing functions are the most common baseline, but non-linear alternatives, such as the \emph{root-p} and \emph{geometric progression} functions~\cite{penha2020curriculum, platanios2019competence,wang2021survey}, are also employed. These can be categorized by their growth relative to linear functions: slower-growing functions extend training on high-scoring examples while faster-growing ones prioritize low-scoring data.

\section{Results}
\label{sec:an_results}

In this section, we aim to provide 
 theoretical justification for the use of this learning setup and present theoretical results within the framework proposed.

\subsection{Analytical results on the mitigation of barren plateaus}

In this subsection, we demonstrate how a learning setup defined by a scoring function $s$ and a pacing function $p$ can address challenges during early training epochs and improve convergence rates. We show that carefully selecting and ordering training data can enhance the training dynamics of parameterized quantum models.\\

\begin{proposition}[Mitigation of barren plateaus]\label{prop:barren_plateaus}
Let $\{\ket{\psi_i} \in \calH\}$ be 
a collection of quantum state vectors.
Let $f_{\vvartheta}: \mathcal{H} \rightarrow \mathcal{Y}$ be a parameterized quantum model with parameters $\vvartheta \in \Theta$ and $\calY \subseteq \bbR$ the label space. Consider a loss function $\ell(\vvartheta; (\ket{\psi}, y))$ associated with data points $\{(\ket{\psi_i}, y_i) \in \mathcal{H} \times \mathcal{Y}\}$. Suppose there exists a scoring function $s:\calH \to [0,1]$ such that the expected squared norm of the gradient
    \begin{equation}
        G(z) = \mathbb{E}\left[\norm{\nabla_{\vvartheta} \ell(\vvartheta; (\ket{\psi}, y)}^2 \mid s(\psi) = z\right]
    \end{equation}
    is a non-increasing function of $s$. Then, this learning setup defined by $s$ increases the expected squared gradient magnitude during early training epochs compared to standard training (uniform access to the entire dataset).
\end{proposition}

The proof is given in Appendix~\ref{app:prop_1}. By prioritizing data points with lower scores---corresponding to higher expected squared gradient norms---we mitigate the effects of barren plateaus during the initial stages of training. We later give an explicit example of a physically-inspired scoring function $s$ such that Proposition~\ref{prop:barren_plateaus} holds. 

\subsection{Analytical results on faster convergence}

Next, we show that this learning setup could also improve the convergence of the model.

\begin{proposition}[Faster convergence]\label{prop:convergence}
Let $\{\ket{\psi_i} \}\in \calH$ be 
a collection of
quantum state vectors. Let $f_{\vvartheta}: \mathcal{H} \rightarrow \mathcal{Y}$ be a parameterized quantum model with parameters $\vvartheta \in \Theta$ and $\calY \subseteq \bbR$ the label space. Consider a convex loss function with Lipschitz 
continuous gradients $\ell(\vvartheta; (\ket{\psi}, y))$ associated with data points $\{(\ket{\psi_i}, y_i) \in \mathcal{H} \times \mathcal{Y}\}$. Suppose there exists a scoring function $s:\calH \to [0,1]$ such that the scoring function $s(\ket{\psi})$ prioritizes data points with lower gradient variance at each epoch, and the pacing function $p$ increases monotonically over epochs $t = 1$ to $T$. Then, this learning setup defined by s and p achieves a theoretical upper bound on its expected empirical risk after T epochs that is less than or equal to the bound for standard training (uniform access to the entire dataset).
\end{proposition}
The proof is given in Appendix~\ref{app:prop_2}. This proposition establishes that by prioritizing data points with lower gradient variance, our learning setup achieves a tighter worst-case convergence guarantee. While this does not strictly prove that the expected risk will be lower in every scenario, it provides a strong theoretical motivation for why this data-ordering strategy is beneficial: it bounds the sub-optimality more tightly, suggesting more stable and reliable training dynamics. The assumption of Lipschitz continuity for the gradients assumption is reasonable, given that the empirical risk $R(\vvartheta)$ is generally bounded. The convexity assumption ensures a well-behaved loss landscape, which, while not always realistic, makes the analysis tractable. However, this assumption might be particularly reasonable in scenarios where this learning setup is combined with other warm-start techniques that initialize the parameters near to optimal values. By employing a scoring function that prioritizes such data points and a pacing function that gradually increases the complexity of the training set, the learning process can benefit from reduced variance in gradient estimates, resulting in faster convergence and lower expected risk after a fixed number of epochs.

While these propositions provide a theoretical intuition for the use of hard example mining and curriculum learning in quantum machine learning, empirical validation remains crucial. The theoretical benefits outlined may vary in practice due to factors such as the specific architecture of the quantum model, the characteristics of the data distribution, and the optimization algorithm used. Constructing meaningful scoring and pacing functions remains challenging, and empirical experiments are essential for assessing the practical effectiveness of this learning setup. Such validation can help identify discrepancies between expected and observed performance, allowing adjustments to optimize the approach for specific applications. 
Interestingly, although Propositions~\ref{prop:barren_plateaus} and~\ref{prop:convergence} lay the theoretical ground for what to expect, the numerical performance in certain cases significantly exceeds these initial insights.

\subsection{Learning setup}\label{sec:num_results}
The methodology and underlying principles presented here are generally applicable to quantum machine learning tasks. For specificity, however, we focus on phase recognition. This section presents numerical results demonstrating the application of the presented framework to quantum phase recognition, a popular benchmark task in QML~\cite{caro2022generalization, wu2023quantum, liu2023model, gil2024understanding, umeano2023can, zapletal2024error} due to its importance in the study of condensed matter physics~\cite{Carrasquilla, Sachdev}. We explore its effectiveness on two paradigmatic quantum spin chains: the \emph{generalized cluster model} and the \emph{bond-alternating XXZ model}, both well-studied quantum many-body Hamiltonians.

The \emph{generalized cluster model} is described by the following local Hamiltonian
\begin{align}
        H_{\text{cluster}} =  \sum_{j=1}^n \left(Z_j - j_1 X_j X_{j+1} - j_2 X_{j-1} Z_j X_{j+1}\right)\,,\label{eq:cluster_hamiltonian}
    \end{align}
where $n$ is the number of qubits, $j_1$ and $j_2$ are coupling strengths, and $X_i$ and $Z_i$ are Pauli operators acting on the $i^\text{th}$ qubit. As shown in Refs.~\cite{verresen2017one,PhysRevB.84.165139}, and illustrated in Fig.~\ref{fig:diagram}(a), the ground-state phase diagram exhibits four distinct regimes: (I) a symmetry-protected topological phase, (II) a trivial phase, (III) an antiferromagnetic phase, and (IV) a ferromagnetic phase. For specific choices of parameters, this Hamiltonian is the standard cluster model, the conventional cluster state being the ground state.

The \emph{bond-alternating XXZ Hamiltonian} is slightly more complicated, although it remains geometrically local, and reads
\begin{equation}
\begin{split}
        H_{\text{XXZ}} =  j_1\sum_{j=1}^{n/2} \left(X_{2j-1} X_{2j} + Y_{2j-1} Y_{2j} + \delta Z_{2j-1} Z_{2j}\right)\\
        + j_2\sum_{j=1}^{n/2-1} \left(X_{2j} X_{2j+1} + Y_{2j} Y_{2j+1} + \delta Z_{2j} Z_{2j+1}\right)\,,\label{eq:XXZ_hamiltonian}
    \end{split}
    \end{equation}
where $j_1$ and $j_2$ are coupling strengths, $X_i$, $Y_i$ and $Z_i$ are Pauli operators acting on the $i^\text{th}$ qubit, and $\delta$ is the spin anisotropy. As depicted in Fig.~\ref{fig:diagram}(b), this model Hamiltonian hosts three distinct phases that can be detected by the many-body topological invariant $\mathcal{Z_R}$~\cite{elben2020many}: (I) a topological phase, (II) a trivial phase, and (III) a antiferromagnetic phase.

Our learning tasks involve developing a classification model that can identify the quantum phase given the ground state of a system described by either the \emph{generalized cluster Hamiltonian} or the \emph{bond-alternating XXZ Hamiltonian}. We achieve this by generating a training set, denoted by $S=\{(\lvert\psi_i\rangle,\vec{y_i})\}_{i=1}^N$, where $N$ represents the number of training data points. Each element in $S$ is a pair: $\lvert\psi_i\rangle$ corresponds to the ground state vector of the Hamiltonian $(H_{\text{cluster}}$ or $H_{\text{XXZ}})$ for a specific choice of coupling constants $(j_1$ and $j_2$ or $j_1/j_2$ and $\delta)$. The ground states are drawn from a distribution $\calD$ which corresponds to sampling these coupling constants uniformly at random, ensuring a balanced representation of all classes. The second element, $\vec{y_i}$, represents the corresponding one-hot encoded phase label. In this scheme, each label is a vector with a length equal to the number of classes, with only one element set to 1, indicating the corresponding phase.

We leverage the \emph{quantum convolutional neural network} (QCNN) architecture introduced in Ref.~\cite{cong2019quantum} for the phase classification tasks. Inspired by classical convolutional neural networks, QCNNs are expected to excel at identifying spatial or temporal patterns, making them well-suited for this purpose. A key feature of QCNNs is their alternating structure of convolutional and pooling layers. Convolutional layers apply parameterized, translation-invariant unitary operations to neighboring qubits, acting like filters between feature maps across layers. Pooling layers then reduce the dimensionality of the quantum state while preserving the relevant information. This is achieved by measuring a subset of qubits and applying single-qubit unitaries based on the measurement outcomes. Each pooling layer consistently reduces the number of qubits by a constant factor, leading to quantum circuits with logarithmic depth compared to the initial system size.

\begin{figure}
    	\centering
    	\includegraphics[width=0.48\textwidth]{phase_diagram.png}
    	\caption{\justifying
        Ground-state phase diagrams of \textbf{(a)} the \emph{generalized cluster Hamiltonian} in Eq.~\eqref{eq:cluster_hamiltonian} and \textbf{(b)} the \emph{bond-alternating XXZ Hamiltonian} in Eq.~\eqref{eq:XXZ_hamiltonian}, exhibiting (I) symmetry-protected topological, (II) trivial, (III) antiferromagnetic, and (IV) ferromagnetic phases.}
    	\label{fig:diagram}
\end{figure}

We can describe the action of the QCNN as a quantum channel $\calC_{\vec{\vartheta}}$ parameterized by $\vec{\vartheta}$. This channel transforms an input state $\rho_{\text{in}}$ into an output state $\rho_{\text{out}}$ via the mapping $\rho_{\text{out}} = \calC_{\vec{\vartheta}}\left[\rho_{\text{in}}\right]$. Following this, a task-specific Hermitian operator is measured on the output state $\rho_{\text{out}}$ to obtain an expectation value. In our numerical implementations, the QCNN maps an $8$-qubit input state vector $\lvert\psi\rangle$ into a $2$-qubit output state. The labeling function for the output state utilizes the probabilities obtained by measuring the state in the computational basis,
giving rise to outcomes $(0,0)$, $(0,1)$, $(1,0)$, $(1,1)$.
Specifically, the predicted label $\vec{\hat{y}}$ corresponds to the vector of the probabilities as defined by
   \begin{align}
        \lvert\psi\rangle \mapsto (p_j)_{j\in\{0,1\}^2} \mapsto \vec{\hat{y}} \coloneqq (p_{0,0}, p_{0,1}, p_{1,0}, p_{1,1})\,.
    \end{align}
For each experimental run, data is generated from the corresponding 
distribution $\calD$. During training, we employ the mean squared error as loss function
\begin{align}
    \ell\left(\vec{\vartheta};\left(\lvert\psi\rangle, \vec{y}\right) \right) &\coloneqq \sum_{j=1}^ M \left(\langle j \rvert\left(\calC_{\vec{\vartheta}}\left[\lvert\psi\rangle\!\langle\psi\rvert\right]\right)\lvert j \rangle - y_j \right)^2\,,
    \end{align}
where $M$ is the number of classes. Then, given a training set of size $N$ drawn from the distribution $\calD$, we sample a mini-batch of size $L$, and we minimize the empirical risk
\begin{align}
    \hat{R}_S(\vec{\vartheta})  &= \frac{1}{L} \sum_{i=1}^L \ell\left(\vec{\vartheta};\left(\lvert\psi_i\rangle, \vec{y}_i\right) \right).
\end{align}

Let us detail here the experimental setup used throughout the following sections. We generate a training set $S$ containing 50 ground states drawn from the distribution $\calD$, and a test set $T$ containing 1000 samples from the same distribution. While the ratio of training to test data is unconventional compared to classical machine learning practices, it is intentionally chosen for this quantum context. Because obtaining ground-state data for quantum many-body systems is resource-intensive, demonstrating robust generalization from a minimal number of training examples is a key benchmark for QML models~\cite{caro2022generalization, gil2024understanding}. Furthermore, restricting the training set size ensures the learning task is sufficiently demanding to clearly isolate and highlight the performance benefits of our data-ordering strategies. The large test set is maintained to provide high statistical confidence in the evaluated generalization performance. 

To ensure a balanced representation of all classes, these states are obtained by sampling the Hamiltonian's coupling constants an equal number of times from within each distinct phase region. We refer to the standard approach of uniformly sampling mini-batches of size $L$ from the training set $S$ as the \textit{standard} method. In this work, we use a fixed mini-batch size of $L=10$. Each learning framework is evaluated over ten independent runs, using newly generated data and random initializations for each run. Next, we define a scoring function $s : S \rightarrow \mathbb{R}$ to measure the difficulty of each training example $(\lvert\psi_i\rangle,\vec{y}_i)$. This function assigns a higher score to harder examples, e.g., $s((\lvert\psi_i\rangle,\vec{y}_i)) > s((\lvert\psi_j\rangle,\vec{y}_i))$. The scoring function $s$ incorporates prior knowledge and thus embodies some of the inductive bias of the model. We additionally define a pacing function to determine how the size of the subsets from which mini-batches are uniformly sampled is adjusted throughout training. Details regarding the implementation and the specific pacing functions employed are provided in Ref.~\cite{github_repository} and Appendix~\ref{app:numerics}.

For the quantum circuit simulations, we leverage the {\tt PennyLane}~\cite{bergholm2018pennylane} software library running on classical hardware. We utilize the ADAM optimizer~\cite{kingma2014adam} for parameter updates during training, which is a stochastic gradient descent method that dynamically adjusts learning rates.\\

\subsection{Self-taught learning}\label{sec:self-taught_CL}
In this first experiment, we order the quantum data based on the loss of a pre-trained QCNN model with identical architecture and trained without ordering. This is usually referred to as \emph{self-taught} learning. At the beginning of the training, we order the training set according to the loss of each data point. We then gradually increase the number of data points used for training throughout the epochs, following a monotonically increasing pacing function. We consider four orderings: (1) \textit{Standard}, which is used as a baseline and corresponds to the standard approach with no specific ordering nor pacing function, all data points are treated equally from the start; (2) \textit{Random}, where data points are randomly ordered, and serves as a control that should behave similar to the Standard approach; (3) \textit{Easy}, which prioritizes data points with lower loss values (easier examples); and (4) \textit{Hard}, which prioritizes data points with higher loss values (harder examples).

\begin{figure}[t!]
    \centering

    \includegraphics[width=0.48\textwidth]{ST_CL.pdf}
    \caption{\label{fig:self-taught}
    \justifying
    Average accuracy on the test set as a function of the training epochs for the quantum phase recognition task for the \textbf{(a)} \emph{generalized cluster model} and \textbf{(b)} the \emph{bond-alternating XXZ model}, using a self-taught strategy. The shaded areas represent the standard error of the mean over ten runs with different random initialization and training data.}

    \vspace{1.5em}

    \begin{minipage}{\columnwidth}
        \centering
        \renewcommand{\arraystretch}{1.2}

        \begin{tabular}{|Sc|Sc|Sc||Sc|Sc|}
            \cline{2-5}
            \multicolumn{1}{c|}{} & \multicolumn{2}{c||}{\textbf{Generalized cluster}} & \multicolumn{2}{c|}{\textbf{Bond-alternating XXZ}} \\
            \cline{2-5}
            \multicolumn{1}{c|}{} & \multicolumn{1}{c|}{\,\,Training $(\%)$\,\,} & \multicolumn{1}{c||}{\,\,Test $(\%)$\,\,} & \multicolumn{1}{c|}{\,\,Training $(\%)$\,\,} & \multicolumn{1}{c|}{\,\,Test $(\%)$\,\,} \\
            \hline
            \,\textit{Standard}\, & 79.4 & 78.2 & 81.2 & 78.6 \\
            \hline
            \,\textit{Random}\,   & 80.6 & 79.2 & 78.8 & 75.8 \\
            \hline
            \,\textit{Easy}\,     & 75.6 & 74.4 & 73.4 & 72.4 \\
            \hline
            \,\textit{Hard}\,     & \textbf{86.6} & \textbf{83.6} & \textbf{88.0} & \textbf{86.5} \\
            \hline
        \end{tabular}

        \captionsetup{justification=justified,singlelinecheck=false}
        \captionof{table}{\label{table:ST_CL}
        \justifying
        Average best accuracy over ten runs for the quantum phase recognition task on the \emph{generalized cluster model} and the \emph{bond-alternating XXZ model}, using a self-taught strategy. The best-performing method prioritizes hard examples early in the training.}
    \end{minipage}
\end{figure}

Figure~\ref{fig:self-taught} shows the average performance over ten runs of these strategies on the \emph{generalized cluster} and the \emph{bond-alternating XXZ models}. For both spin models, the Standard and Random strategies exhibit comparable performance, achieving the best accuracy nearing $80\%$. This similarity arises because Random introduces random ordering, mirroring the uniform treatment of data in the Standard approach. Interestingly, the Easy strategy, which prioritizes easier examples, performs worse than both Standard and Random, converging to a lower accuracy and suggesting that early exposure to easier examples might not be optimal. Conversely, the Hard strategy, which focuses on harder examples, despite initially slower convergence, ultimately yields the best performance, approaching $90\%$ accuracy (see Table~\ref{table:ST_CL}). Notably, for the \emph{bond-alternating XXZ model} shown in Fig.~\ref{fig:self-taught}(b), the highest accuracy is reached before the end of the training and then it decreases in the final optimization steps due to the incorporation of the easiest examples in the training batch. This suggests that exposure to non-informative data points might even be detrimental for the performance of the quantum model, and highlights the potential benefit of focusing on complex examples early in the training.

\subsection{Self-paced learning}\label{sec:self-paced_CL}

\textit{Self-paced} learning dynamically prioritizes training data based on the performance of the current model, unlike the self-taught strategy which scores each point based on the loss with respect to the target hypothesis.
In this approach, we rank the training set by the current loss of each data point at each epoch. We refer the reader to Appendix~\ref{app:self-paced} for an extended discussion on the complexity and overhead of this strategy. We explore four strategies within this framework, using a monotonically increasing pacing function to gradually expose more data points as training progresses: (1) \textit{Standard}, (2) \textit{Easy}, and (3) \textit{Hard}, which we discussed previously, and (4) \textit{Hardest}, which employs a constant pacing function, consistently training on the ten most difficult examples at each epoch. This means that, although the pacing function is constant, the most difficult examples can vary at each epoch depending on the performance of the model. 

The performance of these strategies is illustrated in Fig.~\ref{fig:self-paced}, showing the average accuracy on the training and test sets for the quantum phase recognition task across the two spin models. Standard and Easy strategies achieve similar performance, with a final accuracy nearing $80\%$. This indicates that a straightforward or Easy approach does not significantly boost the model's learning capability. The Hard strategy initially shows promising performance but regresses as more data points are introduced due to the increasing pacing function. Despite this regression, it demonstrates the potential benefit of focusing on complex examples early in the training. In striking contrast, the Hardest strategy quickly converges to near-perfect accuracy on the training set and achieves over $90\%$ accuracy on the test set (see Table~\ref{table:SP_CL}). By consistently training on the hardest examples, the model develops a robust understanding of the most challenging aspects of the training data, such as the boundaries of the different quantum phases---as we further explore in Sec~\ref{sec:accuracy}---hence significantly enhancing its performance.

\begin{figure}[t!]
             \centering             \includegraphics[width=0.48\textwidth]{SP_CL.pdf}
             \caption{\label{fig:self-paced}
             \justifying
             Average accuracy on the test set as a function of the training epochs for the quantum phase recognition task for the \textbf{(a)} \emph{generalized cluster model} and \textbf{(b)} the \emph{bond-alternating XXZ model}, using a self-paced strategy. The shaded areas represent the standard error of the mean over ten runs with different random initialization and training data.
                }
    \end{figure}
\begin{table}[t!]
\centering
\renewcommand{\arraystretch}{1.2} 

\begin{tabular}{|Sc|Sc|Sc||Sc|Sc|} \cline{2-5}
\multicolumn{1}{c|}{} & \multicolumn{2}{c||}{\textbf{Generalized cluster}} & \multicolumn{2}{c|}{\textbf{Bond-alternating XXZ}} \\ \cline{2-5}
\multicolumn{1}{c|}{} & \multicolumn{1}{c|}{\,\,Training $(\%)$\,\,} & \multicolumn{1}{c||}{\,\,Test $(\%)$\,\,} & \multicolumn{1}{c|}{\,\,Training $(\%)$\,\,} & \multicolumn{1}{c|}{\,\,Test $(\%)$\,\,} \\ \hline
\,\textit{Standard}\, & 83.2 & 77.4 & 82.2 & 77.4 \\ \hline
\,\textit{Easy}\, & 80.8 & 78.1 & 76.4 & 73.5\\ \hline
\,\textit{Hard}\, & 98.6 & \textbf{92.6} & 95.0 & 91.6\\ \hline
\,\textit{Hardest}\, & \textbf{99.8} & \textbf{92.6} & \textbf{97.2} & \textbf{93.5}\\
\hline\end{tabular}
\caption{\justifying
Average best accuracy over ten runs for the quantum phase recognition task on the \emph{generalized cluster model} and the \emph{bond-alternating XXZ model}, using a self-paced strategy. The best-performing method prioritizes the hardest examples in each training epoch.}
 \label{table:SP_CL}
\end{table}

\subsection{Physics-inspired learning}\label{sec:physics-inspired_CL}

A key difference between classical and quantum machine learning is, once again, the presence of barren plateaus, where the variation of the loss function over the parameter landscape is exponentially suppressed with the system size \cite{mcclean2018barren}. At its core, this is a concentration of measure effect in high dimensions and can be captured in several different ways. One effective approach is through the \emph{dynamical Lie algebra} (DLA) of the circuit ansatz \cite{ragone2023unified,AguilarLie}.
The DLA $\mathfrak{g}$ is defined as the Lie closure over its set of generators, 
\begin{align}
    \mathfrak{g} \coloneqq \left<i\calG \right>_{\text{Lie}}\, ,
\end{align}
where the generators are the generators of the parametrized gates of the circuit, $\calG \coloneqq \{H_1, H_2, \dots, H_K\}$. The DLA $\mathfrak{g}$ is the vector space (over $\bbR$) that is obtained by taking the span over 
all the nested commutators that can be formed from the elements of $i\calG$. Assuming that the quantum neural network approximates a unitary 2-design 
\cite{gross_evenly_2007} over the unitary group generated by the DLA $\mathfrak{g}$,
for the training state $\rho$, the variation of the loss is then given by 
\begin{align} \label{eqn:analytical results: variation of loss formula}
    \mathrm{Var}(\ell(\vvartheta;\rho, O)) = \frac{\calP_{\mathfrak{g}}(\rho) \calP_{\mathfrak{g}}(O)}{\mathrm{dim}(\mathfrak{g})} \, ,
\end{align}
where $P_{\mathfrak{g}}$ denotes the $\mathfrak{g}$-purity, defined for Hermitian operators as
\begin{align} \label{eqn:g_purity}
    \calP_{\mathfrak{g}}(H) \coloneqq \sum_{l=1}^{\mathrm{dim}(\mathfrak{g})} \left| \Tr(B_j^\dagger H) \right|^2 \, .
\end{align}
Here, $\{B_j\}_{j=1}^{\mathrm{dim}(\mathfrak{g})}$ is a Hilbert-Schmidt orthonormal basis of the vector space $\mathfrak{g}_{\bbC} \coloneqq \mathrm{span}(\mathfrak{g})$. From Eq.~(\ref{eqn:analytical results: variation of loss formula}) we observe that barren plateaus occur if either $1/\dim(\mathfrak{g})=\calO(b^{-n})$, $\calP_{\mathfrak{g}}(O)=\calO(b^{-n})$ or $\calP_{\mathfrak{g}}(\rho) = \calO(b^{-n})$ with $b>1$. 

The only term in Eq.~(\ref{eqn:analytical results: variation of loss formula}) that depends on the training data is the $\mathfrak{g}$-purity given in Eq.~(\ref{eqn:g_purity}). Favorable scaling of this quantity is a necessary condition to avoid an exponentially vanishing variance in Eq.~(\ref{eqn:analytical results: variation of loss formula}) and, consequently, the onset of barren plateaus. Since one of the key statistical assumptions of barren plateaus is a random initialization of the trainable parameters, their appearance is most accurate at the beginning of training.
Particularly during this initial training phase, one can expect that training data with higher $\mathfrak{g}$-purity enhances the trainability of the model, as these points are associated, on average, with larger gradients in the asymptotic regime. This, however, does not necessarily hold for finite system sizes, as barren plateaus are an intrinsically asymptotic concept.
Building on this intuition, we propose the following domain-specific scoring function 
\begin{align}
    s(\rho) = 1-\calP_{\mathfrak{g}}(\rho)
\end{align}
where lower scores correspond to training examples with high $\mathfrak{g}$-purity, which are prioritized at the beginning of training. Notice that choosing $s$ this way ensures that Proposition~\ref{prop:barren_plateaus} holds. Accordingly, as training progresses and the model converges to more favorable regions of the cost landscape, the statistical assumptions underpinning barren plateaus become less relevant. As this occurs, we gradually incorporate more training data with higher scores, which correspond to lower $\mathfrak{g}$-purity. Ideally, this way, the model can still learn from the complete training set without low gradient data points impeding the training progress.

\begin{figure}[t!]
             \centering
             \includegraphics[width=0.48\textwidth]{PI_CL.pdf}
             \caption{\label{fig:phys-ins}
             \justifying
             Average accuracy on the test set as a function of the training epochs for the quantum phase recognition task for the \textbf{(a)} \emph{generalized cluster model} and \textbf{(b)} the 
             \emph{bond-alternating XXZ model}, using a physics-inspired strategy ($P_{\mathfrak{g}}$ of the training quantum states). The shaded areas represent the standard error of the mean over ten runs with different random initialization and training data.
                }
    \end{figure}
\begin{table}[t!]
\centering
\renewcommand{\arraystretch}{1.2} 

\begin{tabular}{|Sc|Sc|Sc||Sc|Sc|} \cline{2-5}
\multicolumn{1}{c|}{} & \multicolumn{2}{c||}{\textbf{Generalized cluster}} & \multicolumn{2}{c|}{\textbf{Bond-alternating XXZ}} \\ \cline{2-5}
\multicolumn{1}{c|}{} & \multicolumn{1}{c|}{\,\,Training $(\%)$\,\,} & \multicolumn{1}{c||}{\,\,Test $(\%)$\,\,} & \multicolumn{1}{c|}{\,\,Training $(\%)$\,\,} & \multicolumn{1}{c|}{\,\,Test $(\%)$\,\,} \\ \hline
\,\textit{Standard}\, & 71.6 & 72.5 & 70.0 & 67.9 \\ \hline
\,\textit{Random}\, & 72.8 & 72.9 & \textbf{73.8} & \textbf{71.9}\\ \hline
\,\textit{Higher $\calP_{\mathfrak{g}}$}\, & 68.6 & 70.9 & 73.4 & 71.6\\ \hline
\,\textit{Lower $\calP_{\mathfrak{g}}$}\, & \textbf{78.6} & \textbf{77.2} & 62.4 & 60.8\\
\hline\end{tabular}
\caption{\justifying
Average best accuracy over ten runs for the quantum phase recognition task on the \emph{generalized cluster model} and the 
\emph{bond-alternating XXZ model}, using a physics-inspired strategy.}
 \label{table:PI_CL}
\end{table}

A potential limitation of using $\mathfrak{g}$-purity as a score function for larger system sizes is the computational difficulty associated with calculating $\calP_{\mathfrak{g}}$, which generally scales exponentially with the number of qubits. To address this challenge, it is worthwhile to explore approximation schemes for computing $\calP_{\mathfrak{g}}$, which could have broader implications beyond our specific context, particularly for understanding when quantum systems can be efficiently simulated using Lie-algebraic techniques~\cite{goh2023liealgebraicclassicalsimulationsvariational}.
 To this end, we note that there are two sources of computational hardness. First, the Lie algebra itself might be exponentially large, requiring an exponentially sized basis of $\mathfrak{g}_{\bbC}$. In such cases, training would likely be infeasible anyway, as then $1/\dim(\mathfrak{g}) = \calO(b^{-n})$. Second, even if the dimension of the Lie algebra grows only polynomially in the system size, that does not guarantee an efficient method for finding a basis for the space $\mathfrak{g}_{\bbC}$. Computing the basis elements $\{B_j\}_{j=1}^{\mathrm{dim}(\mathfrak{g})}$ requires computing nested commutators of the generators in $i\calG$ and verifying the linear independence the resulting operators (c.f.\ 
 Algorithm~1 in Ref.~\cite{Larocca_2022}). 
 While we have an efficient classical description of the generating set, this does not ensure that the required operations (nested commutators and linear independence checks) can be performed efficiently. Instead, we may need to represent the generators as elements of $\bbC^{2^n}$ to carry out these computations, making the determination of even a single basis element $B_j$ quickly infeasible. There are, however, special cases in which efficient procedures for computing $\{B_j\}_{j=1}^{\mathrm{dim}(\mathfrak{g})}$ are known \cite{goh2023liealgebraicclassicalsimulationsvariational,ragone2023unified}.

To avoid computational bottlenecks in this section, we simplified the QCNN while preserving its structure to a parameterized matchgate circuit composed of fermionic Gaussian unitaries. Specifically, the circuit generators are $\calG = \{Z_i\}^8_{i=1} \cup \{X_i X_{i+1}\}^7_{i=1}$, producing a DLA $\mathfrak{g}$ that remains computationally feasible. This simplification results in a QCNN with fewer variational parameters, leading to reduced performance across all strategies. Nonetheless, this trade-off is acceptable because our goal is not to optimize performance but rather to compare differences between frameworks (e.g., Standard versus alternative methods). We consider four strategies: (1) \textit{Standard} and (2) \textit{Random}, previously discussed, and (3) \textit{Higher $\calP_{\mathfrak{g}}$} and (4)  \textit{Lower $\calP_{\mathfrak{g}}$}, which prioritize higher and lower $\calP_{\mathfrak{g}}$ states early in the training, respectively.

Figure~\ref{fig:phys-ins} shows the accuracy performance of these strategies on the training and test sets for our quantum phase recognition tasks. We observe different behaviors for each spin model under study. For the \emph{generalized cluster model}, the Standard and Random strategies perform similar, as expected. Interestingly, the Higher $\calP_{\mathfrak{g}}$ also performs similarly, though with slightly worse results. Conversely, the Lower $\calP_{\mathfrak{g}}$ strategy shows a gradual improvement initially, eventually achieving the highest performance on both test and training sets (see also Table~\ref{table:PI_CL}).

In contrast, for the \emph{bond-alternating XXZ model}, while the Standard and Random strategies exhibit similar performance to those observed in the \emph{generalized cluster model}, the $\calP_{\mathfrak{g}}$-based strategies behave differently. In this case, the final accuracy of the Higher $\calP_{\mathfrak{g}}$ strategy surpasses that of the Lower $\calP_{\mathfrak{g}}$ strategy, which significantly underperforms compared to the other strategies. Notably, despite the Higher $\calP_{\mathfrak{g}}$ strategy achieving the best final accuracy, the highest accuracy during the training is attained by the Random strategy, indicating that this specific implementation does not benefit from $\calP_{\mathfrak{g}}$ ordering for this particular task.

These results suggest that ordering the quantum data according to $\calP_{\mathfrak{g}}$ does not consistently yield major improvements in learning performance in this particular setup. While ordering based on $\calP_{\mathfrak{g}}$ guarantees larger gradients on average at the early stages of the training, it does not necessarily lead to better learning outcomes.
However, it is evident that $\calP_{\mathfrak{g}}$ captures meaningful properties of the problem, as the performance of the model varies significantly with different orderings. Further research could be beneficial to understand the impact of $\calP_{\mathfrak{g}}$ on training and to develop a scoring function that may leverage its properties in a non-trivial manner.


\subsection{Large-scale implementations}\label{sec:larger_numerics}

To assess the scalability of our proposed framework, we extend our numerical experiments to larger system sizes. In this section, we investigate the performance of the most promising method identified in our earlier experiments (the self-paced learning strategy) on the generalized cluster model for systems of 16 and 32 qubits.

\begin{figure}[t!]
    \centering
             \includegraphics[width=0.48\textwidth]{SP_large.pdf}
             \caption{\label{fig:large_CL}
             \justifying
             Average accuracy on the test set as a function of the training epochs for the quantum phase recognition task for the \emph{generalized cluster model} on \textbf{(a)} 16 qubits  and \textbf{(b)} 32 qubits, using a self-paced strategy. The shaded areas represent the standard error of the mean over six runs with different random initialization and training data.
                }

    \vspace{1.5em}

    \begin{minipage}{\columnwidth}
        \centering
        \renewcommand{\arraystretch}{1.2}

        \begin{tabular}{|Sc|Sc|Sc||Sc|Sc|} \cline{2-5}
\multicolumn{1}{c|}{} & \multicolumn{2}{c||}{\textbf{16 qubits}} & \multicolumn{2}{c|}{\textbf{32 qubits}} \\ \cline{2-5}
\multicolumn{1}{c|}{} & \multicolumn{1}{c|}{\,\,Training $(\%)$\,\,} & \multicolumn{1}{c||}{\,\,Test $(\%)$\,\,} & \multicolumn{1}{c|}{\,\,Training $(\%)$\,\,} & \multicolumn{1}{c|}{\,\,Test $(\%)$\,\,} \\ \hline
\,\textit{Standard}\, & 71.7 & 67.4 & 58.7 & 53.2 \\ \hline
\,\textit{Easy}\, & 67.0 & 61.1 & 58.3 & 52.1\\ \hline
\,\textit{Hard}\, & \textbf{73.3} & \textbf{71.7} & \textbf{63.3} & \textbf{58.7}\\
\hline\end{tabular}

        \captionsetup{justification=justified,singlelinecheck=false}
        \captionof{table}{\label{table:large_CL}
        \justifying
        Average best accuracy over six runs for the quantum phase recognition task on the \emph{generalized cluster model} using a self-paced strategy, for 16- and 32-qubit QCNNs.}
    \end{minipage}
\end{figure}

\begin{figure*}[t]
    	\centering
    	\includegraphics[width=0.45\textwidth]{probability_generalized.pdf}
     \includegraphics[width=0.441\textwidth]{probability_XXZ.pdf}
    	\caption{\justifying
        \textbf{(a)} The \emph{generalized cluster Hamiltonian} and \textbf{(b)} \emph{bond-alternating XXZ Hamiltonian} for Standard (top) and Hardest (bottom) strategies. The background color of each panel indicates the true quantum phase of the system. A correct classification occurs when the highest probability color matches the background color of the panel. The self-paced Hardest strategy demonstrates superior accuracy both within each phase region and at the phase boundaries.}
    	\label{fig:probabilities_generalized}
\end{figure*}

The results of these larger-scale simulations are presented in Fig.~\ref{fig:large_CL} and summarized in Table~\ref{table:large_CL}. For both the 16- and 32-qubit QCNN, the Hard strategy clearly demonstrates superior performance, reaching a higher test accuracy compared to both the Standard and Easy approaches. This is particularly remarkable in the 32-qubit case: while the absolute accuracy for all methods decreases, this drop is an expected consequence of the increased problem complexity coupled with the simplified, lower-parameter ansatz required to make the classical simulation of 32 qubits computationally feasible. Furthermore, within the restricted number of training epochs shown, this highlights a core advantage of our approach. In practical large-scale quantum machine learning, the computational budget for training steps is often strictly limited. The fact that the performance gap between the Hard strategy and the others remains (and even widens) under these strict constraints confirms that prioritizing difficult examples yields a more efficient optimization path. Thus, even if absolute performance is limited by the ansatz capacity and number of steps, the trainability benefits of data-ordering strategies persist.

Quantitatively, Table~\ref{table:large_CL} shows that the Hard strategy achieves a best test accuracy of $71.7\%$ for 16 qubits and $58.7\%$ for 32 qubits. This represents a performance improvement over the Standard approach of $4.3$ percentage points for the 16-qubit case and $5.5$ percentage points for the 32-qubit case. This widening gap suggests that the inductive bias introduced by focusing on challenging examples might be particularly relevant for navigating the more complex optimization landscapes of larger quantum systems. The Standard method, which samples uniformly, is more likely to be hindered by non-informative gradients, whereas the Hard strategy effectively steers the model toward learning the most decisive features of the data. These findings support the conclusion that a self-paced, difficulty-based curriculum is a scalable and effective approach for improving the performance of quantum machine learning models.

\subsection{Accuracy at the cost of confidence}\label{sec:accuracy}

In this section, we try to understand further the differences between the Standard approach and the best-performing method within the proposed framework: the self-paced Hardest learning strategy. We analyze the probabilities for the different phases as determined by the 8-qubit QCNN along a specific cut in the phase diagram in Fig.~\ref{fig:diagram}.

We begin by considering the \emph{generalized cluster Hamiltonian}. We focus on a cut along the $j_2$ coupling constant, fixing $j_1=1$. The model undergoes three phase transitions: (1) from symmetry-protected topological to ferromagnetic at $j_2=2$, (2) from ferromagnetic to trivial at $j_2=0$, and (3) from trivial to symmetry-protected topological at $j_2=1$. Figure~\ref{fig:probabilities_generalized}(a) illustrates the probabilities for the different phases under both the Standard (top) and self-paced Hardest (bottom) strategies. The phase diagram is accurately captured by the QCNN trained with the self-paced strategy, whereas the Standard approach fails to precisely detect the phase boundaries.

Next, we examine the \emph{bond-alternating XXZ Hamiltonian}. Here, we zoom-in cut along the $j_1/j_2$ coupling constants, fixing $\delta=3$. The model features two phase transitions: (1) from trivial to symmetry-broken antiferromagnetic at $j_1/j_2\approx 1$, and (2) from symmetry-broken antiferromagnetic to topological at $j_1/j_2\approx 2$. Figure~\ref{fig:probabilities_generalized}(b) shows the probabilities for the different phases under both the Standard (top) and self-paced Hardest (bottom) strategies. Once again, the QCNN trained with the self-paced strategy accurately captures the phase diagram, while the Standard approach misclassifies a large fraction of the ground states.

Overall, the Hardest approach outperforms the Standard strategy in terms of accuracy, both within individual phases and the phase boundaries. However, it is worth noting that the Hardest approach exhibits less confidence in its predictions, as indicated by the smaller differences between the probabilities of different classes. This suggests a potential trade-off between accuracy and prediction confidence. Despite its superior performance, we still observe slight deviations in the predicted transition points from the actual phase boundary. This discrepancy is expected due to the relatively large correlation length near the phase boundary, which exceeds what the capacity of the 8-qubit QCNN can capture accurately.

\section{Conclusions}\label{sec:discussion}

In this work, we have introduced a training approach for quantum machine learning models that leverages the structure of the training data itself to introduce additional inductive bias, thereby improving learning capabilities. Drawing inspiration from classical methods like curriculum learning and hard example mining, we developed a framework tailored to quantum models. We provided some analytical insights and extensive numerical evidence that demonstrates how strategically ordering the quantum data can significantly improve the performance of quantum models.

Analytically, we have established that data ordering can mitigate the effects of barren plateaus (Proposition~\ref{prop:barren_plateaus}) and tighten the upper bound on the expected empirical risk to enhance convergence (Proposition~\ref{prop:convergence}). Our numerical experiments have focused on the task of quantum phase recognition in quantum spin chains, explicitly validated these theoretical insights. In particular, the accelerated convergence and improved final accuracy observed when using data-ordering strategies directly reflect the analytical benefits of reduced gradient variance outlined in Proposition~\ref{prop:convergence}. Among all the data-ordering strategies, we have observed that prioritizing harder data points early in the training process leads to superior performance compared to traditional training methods. The self-paced learning strategy, which dynamically adjusts data presentation based on current performance, has proven to be particularly effective, consistently achieving higher accuracy than other strategies. Crucially, 
our large-scale simulations on systems of up to 32 qubits have confirmed that this performance advantage is maintained as system size increases. This emphasizes the potential of data ordering in quantum models, not just as a theoretical concept but as a relevant factor in practical applications.

While the framework presented here showcases 
interesting results, it also opens up several avenues for future research. One possible direction is to extend these methods to tasks beyond quantum phase recognition, such as unitary learning~\cite{cincio2018learning, beer2020training, cincio2021machine, caro2022generalization, yu2023optimal} or quantum error correction~\cite{cong2019quantum, locher2023quantum}. 
Furthermore, exploring different complexity measures---such as those based on 
notions of entanglement---or combining multiple complexity metrics, including those used in this work, could help refine the scoring and pacing functions. Integrating this data-centric approach 
with other warm-start techniques, such as those focused on parameter initialization, could further enhance model performance. On 
a higher level, this work further reinforces the importance of data-centric inductive biases in quantum machine learning models, suggesting that advancing this concept could lead to more sophisticated quantum circuit design strategies that can optimize performance from the outset, while also underscoring the value 
of leveraging established classical machine learning methods to foster meaningful progress in quantum machine learning.
\bigskip

\section*{Code and data availability}
    The code and data generated during the current study are available in Ref.~\cite{github_repository}.

\bigskip

\acknowledgments
The authors would like to thank Elies Gil-Fuster and Adrián Pérez-Salinas for useful feedback on a previous version of this manuscript. E.~R.-A.\ acknowledges funding from the Government of Spain (Severo Ochoa CEX2019-000910-S, FUNQIP, and European Union NextGenerationEU PRTR-C17.I1), Fundació Cellex, Fundació Mir-Puig, Generalitat de Catalunya (CERCA program) and European Union (PASQuanS2.1, 101113690). E.~R.-A.\ is a fellow of Eurecat's \enquote{Vicente López} PhD grant program. F.~J.~S., J.~E., and C.~B.-P.\  are supported by the BMFTR (MUNIQC-Atoms, Hybrid++, QoSol, PasQuops), 
the BMWK (EniQma), the Munich Quantum Valley (K-8), the Quantum Flagship (PasQuans2, Millenion), the Einstein Foundation (Einstein Research Unit on  Quantum Devices), Berlin Quantum, and the DFG (CRC 183, SPP 2514). \smallskip

\bibliography{citations.bib}

\vspace*{0.5cm}

\appendix

\setcounter{theorem}{0}

\section{Proof of Proposition 1}
\label{app:prop_1}

In this section, we re-state and prove Proposition~\ref{prop:barren_plateaus}.

\begin{proposition}[Mitigation of barren plateaus]
Let $\{\ket{\psi_i} \in \calH\}$ be 
a collection of quantum state vectors.
Let $f_{\vvartheta}: \mathcal{H} \rightarrow \mathcal{Y}$ be a parameterized quantum model with parameters $\vvartheta \in \Theta$ and $\calY \subseteq \bbR$ the label space. Consider a loss function $\ell(\vvartheta; (\ket{\psi}, y))$ associated with data points $\{(\ket{\psi_i}, y_i) \in \mathcal{H} \times \mathcal{Y}\}$. Suppose there exists a scoring function $s:\calH \to [0,1]$ such that the expected squared norm of the gradient
    \begin{equation}
        G(z) = \mathbb{E}\left[\norm{\nabla_{\vvartheta} \ell(\vvartheta; (\ket{\psi}, y)}^2 \mid s(\psi) = z\right]
    \end{equation}
    is a non-increasing function of $s$. Then, this learning setup defined by $s$ increases the expected squared gradient magnitude during early training epochs compared to standard training (random data order).
\end{proposition}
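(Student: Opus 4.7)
\emph{Proof plan.} Let $S := s(\ket{\psi})$ denote the (random) score of a data point drawn from the empirical distribution over the training set, and let $F_S$ be its cumulative distribution function. The statement of the proposition compares two sampling schemes at a given epoch $t$: under \emph{standard} training the data point is drawn uniformly from the full training set (so $S$ follows its marginal law), while under the $s$-defined setup at epoch $t$ the data point is drawn from the sub-population $\{\ket{\psi} : s(\ket{\psi}) \leq q_t\}$, where $q_t := F_S^{-1}(p(t))$ is the $p(t)$-quantile of $S$. My plan is to translate both expected squared gradient norms into expectations of the conditional quantity $G(S)$ and then exploit monotonicity to order them.

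The key step is a simple application of the law of total expectation together with the assumption that $G$ is non-increasing in $z$. Writing
\begin{equation}
    \mathbb{E}[G(S)] \;=\; p(t)\,\mathbb{E}[G(S)\mid S \le q_t] + (1-p(t))\,\mathbb{E}[G(S)\mid S > q_t],
\end{equation}
the monotonicity assumption implies that for any $s_1\le q_t \le s_2$ one has $G(s_1)\ge G(s_2)$, and hence
\begin{equation}
    \mathbb{E}[G(S)\mid S \le q_t] \;\ge\; \mathbb{E}[G(S)\mid S > q_t].
\end{equation}
Substituting the lower quantity for the upper in the total-expectation identity yields $\mathbb{E}[G(S)] \le \mathbb{E}[G(S)\mid S \le q_t]$, which is precisely the claim: the expected squared gradient magnitude under the proposed setup, $\mathbb{E}[\lVert\nabla_{\vvartheta}\ell\rVert^{2}\mid S\le q_t] = \mathbb{E}[G(S)\mid S\le q_t]$, dominates that under standard training, $\mathbb{E}[G(S)]$. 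The inequality is strict whenever $p(t)<1$ and $G$ is not almost-surely constant on $[0,1]$, which captures the ``early training epochs'' qualifier once one recalls that the pacing function $p$ is monotonically increasing from a small initial value to $1$.

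The main (mild) obstacle is matching the continuous-quantile language above with the discrete, finite training set $\bbX$ of size $N$: the pacing function then selects the $\lceil p(t)N\rceil$ lowest-scoring examples, and I would phrase the argument in terms of order statistics $S_{(1)}\le \dots \le S_{(N)}$, rewriting the two expectations as empirical averages over the full set and over the prefix of length $\lceil p(t)N\rceil$, respectively. The same rearrangement inequality applies: by the non-increasing property of $G$, the average of $G$ over the smallest-score prefix upper bounds the average over the remainder, so also the average over the full set, giving the claimed dominance. A secondary point to address is that $G$ is defined with respect to the expectation over the model/label law at fixed $\vvartheta$, so the inequality holds pointwise in $\vvartheta$ and in particular at the random parameter values reached during early training.
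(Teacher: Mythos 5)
Your proof is correct, and while it rests on the same core idea as the paper's --- compare the conditional expectation of the non-increasing function $G$ over the low-score sub-population with its unconditional expectation --- the way you execute the comparison is different and, in fact, sounder. The paper pivots both quantities around the threshold value $G(z_t)$: it lower-bounds $\mathbb{E}_{\mathcal{D}_t}[\lVert g\rVert^2]$ by $G(z_t)$ (valid) and then claims the upper bound $\mathbb{E}_{\text{rand}}[\lVert g\rVert^2]\le G(z_t)$ by bounding $\int_0^{z_t}G\,dF\le G(z_t)F(z_t)$; that last inequality points the wrong way, since $G(z)\ge G(z_t)$ on $[0,z_t]$, and indeed $\mathbb{E}_{\text{rand}}[\lVert g\rVert^2]\le G(z_t)$ is false in general (take $G$ a step function dropping to zero at $z_t$ with $F$ uniform). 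Your law-of-total-expectation decomposition, $\mathbb{E}[G(S)]=p(t)\,\mathbb{E}[G(S)\mid S\le q_t]+(1-p(t))\,\mathbb{E}[G(S)\mid S>q_t]$ combined with $\mathbb{E}[G(S)\mid S\le q_t]\ge G(q_t)\ge\mathbb{E}[G(S)\mid S>q_t]$, bypasses the faulty pivot entirely and delivers the claimed ordering directly; it also makes the strictness condition and the ``early epochs'' ($p(t)<1$) qualifier transparent. Your additional remarks on the discrete order-statistics formulation and on the inequality holding pointwise in $\vvartheta$ are appropriate touches of rigor that the paper glosses over. In short: same idea, but your route is the one that actually closes the argument.
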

\begin{proof}
    Let $F$
    be the cumulative distribution function of the scores over the data set given by
    \begin{equation}
        F(z) = \mathbb{P}\left[s(\psi_i) \leq z\right].
    \end{equation}
    The learner has access to the subset $\mathcal{D}_t = \{(\ket{\psi_i}, y_i) \mid s(\psi_i) \leq z_t\}$. The expected gradient magnitude at epoch $t$ is defined as
    \begin{equation}
        \mathbb{E}_{\mathcal{D}_t} \left[\norm{g}^2\right] = \frac{1}{F(z_t)} \int_0^{z_t} G(z) dF(z) \,,
    \end{equation}
    where $g$ is the gradient of the loss function with respect to the parameters $\vvartheta$. For random data presentation, all data points are equally likely to be sampled, regardless of their scores. Therefore,
    \begin{equation}
        \mathbb{E}_{\text{rand}} \left[\norm{g}^2\right] = \int_0^1 G(z) dF(z) \,.
    \end{equation}
    Since $G$
    is non-increasing, for $z \leq z_t$, $G(z) \geq G(z_t)$, and similarly, for $z \geq z_t$, $G(z) \leq G(z_t)$. Therefore, we can lower bound
    \begin{equation}
        \mathbb{E}_{\mathcal{D}_t} \left[\norm{g}^2\right] \geq \frac{1}{F(z_t)} G(z_t) \int_0^{z_t} dF(z) = G(z_t) \,,
    \end{equation}
    and upper bound
    \begin{equation}
        \mathbb{E}_{\text{rand}} \left[\norm{g}^2\right] \leq G(z_t) F(z_t) + G(z_t)(1-F(z_t)) = G(z_t) \,,
    \end{equation}
    which leads to
    \begin{equation}
        \mathbb{E}_{\mathcal{D}_t} \left[\norm{g}^2\right] \geq \mathbb{E}_{\text{rand}} \left[\norm{g}^2\right],
    \end{equation}
    which proves the statement to be shown.
\end{proof}

\section{Proof of Proposition 2}\label{app:prop_2}
In this section, we re-state and prove Proposition~\ref{prop:convergence}.

\begin{proposition}[Faster convergence]
Let $\{\ket{\psi_i} \}\in \calH$ be 
a collection of
quantum state vectors. Let $f_{\vvartheta}: \mathcal{H} \rightarrow \mathcal{Y}$ be a parameterized quantum model with parameters $\vvartheta \in \Theta$ and $\calY \subseteq \bbR$ the label space. Consider a convex loss function with Lipschitz 
continuous gradients $\ell(\vvartheta; (\ket{\psi}, y))$ associated with data points $\{(\ket{\psi_i}, y_i) \in \mathcal{H} \times \mathcal{Y}\}$. Suppose there exists a scoring function $s:\calH \to [0,1]$ such that the scoring function $s(\ket{\psi})$ prioritizes data points with lower gradient variance at each epoch, and the pacing function $p$ increases monotonically over epochs $t = 1$ to $T$. Then, this learning setup defined by s and p achieves a theoretical upper bound on its expected empirical risk after T epochs that is less than or equal to the bound for standard training (uniform access to the entire dataset).
\end{proposition}

\begin{proof}
    In standard training, data points are selected uniformly at random. The variance of the stochastic gradient is
    \begin{equation}
        \sigma_{\text{rand}}^2 = \mathbb{E}_{(\ket{\psi}, y)\sim D} \left[\norm{\nabla \ell(f_{\vvartheta_t}; (\ket{\psi}, y))- \nabla R(\vvartheta_t)}^2\right]
    \end{equation}
    where $D$ is the uniform distribution over the entire dataset. In contrast, for the learning setup defined by $s$ and $p$, the variance of the stochastic gradient is
    \begin{equation}
        \sigma_{s,p}^2 = \mathbb{E}_{(\ket{\psi}, y)\sim D_t} \left[\norm{ \nabla \ell(f_{\vvartheta_t}; (\ket{\psi}, y))- \nabla R(\vvartheta_t)}^2\right]
    \end{equation}
    where 
   \begin{equation} \mathcal{D}_t = \{(\ket{\psi_i}, y_i) \mid s(\psi_i) \leq z_t\}.
   \end{equation}
   For convex functions with Lipschitz continuous gradients, the expected suboptimality after $T$ epochs is bounded by~\cite{shalev2014understanding}
    \begin{equation}
        \mathbb{E}\left[R(\vvartheta_T) - R(\vvartheta^\ast)\right] \leq \frac{\norm{\vvartheta_0 - \vvartheta^\ast}^2}{2\eta T} + \frac{\eta}{2T}\sum_{t=1}^T \sigma^2(t)\,,
    \end{equation}
    where $\vvartheta_0$ are the initial parameters, $\vvartheta^\ast$ are the optimal parameters minimizing $R(\vvartheta)$, and $\eta$ is the learning rate (which we assume to be constant for simplicity). Then, for the standard training we have
    \begin{equation}\label{eq:ub_rand}
        \mathbb{E}\left[R(\vvartheta_T^{\text{rand}}) - R(\vvartheta^\ast)\right] \leq \frac{\norm{\vvartheta_0 - \vvartheta^\ast}^2}{2\eta T} + \frac{\eta}{2T}\sum_{t=1}^T \sigma_{\text{rand}}^2(t)\,,
    \end{equation}
    and for the setup defined by $s$ and $t$
    \begin{equation}\label{eq:ub_sp}
        \mathbb{E}\left[R(\vvartheta_T^{s,p}) - R(\vvartheta^\ast)\right] \leq \frac{\norm{\vvartheta_0 - \vvartheta^\ast}^2}{2\eta T} + \frac{\eta}{2T}\sum_{t=1}^T \sigma_{s,p}^2(t).
    \end{equation}
    Given that for all $t$, $\sigma_{s,p}^2(t) \leq \sigma_{\text{rand}}^2(t)$, it follows that
    \begin{equation}\label{eq:ub_comparison}
        \sum_{t=1}^T \sigma_{s,p}^2(t) \leq \sum_{t=1}^T \sigma_{\text{rand}}^2(t)\,.
    \end{equation}
    Let $K_{\text{rand}}$ and $K_{s,p}$ denote the right-hand sides of the inequalities in \eqref{eq:ub_rand} and \eqref{eq:ub_sp}, respectively. These are the upper bounds on the expected sub-optimality for our methods and standard training. From \eqref{eq:ub_comparison}, it is clear that the second term in $K_{s,p}$ is less than or equal to the second term in $K_{\text{rand}}$. Therefore, we can conclude that the overall upper bound for our method is tighter:
    \begin{equation}
        K_{s,p} \leq K_{\text{rand}}\,.
    \end{equation}    
    This proves that our learning setup achieves an equal or tighter theoretical upper bound on the expected empirical risk compared to standard training
\end{proof}

\section{Additional numerical details}\label{app:numerics}
This appendix provides a detailed description of the numerical methods, model architecture, and hyperparameters used for the quantum phase classification experiments on the generalized cluster Hamiltonian (GCH) and the bond-alternating XXZ Hamiltonian (XXZ).

\subsection{Dataset generation}
The training and test datasets were generated using a balanced sampling procedure to ensure an equal representation of each quantum phase. For each model, we sample parameters points uniformly at random from the regions of the 2D phase diagram (see Fig.~\ref{fig:diagram}) corresponding to each distinct phase. The training set was constructed from $50$ total parameter points, resulting in $12$-$13$ samples per phase for the GCH (four phases) and $16$-$17$ samples per phase for the XXZ model (three phases). The test set was generated using the same balanced distribution but comprised a larger set of $1000$ samples ($250$ for the large-scale simulations) to ensure a robust evaluation. The final datasets consist of the ground states corresponding to these parameter points.

\subsection{Quantum circuit architecture}
We employed a Quantum Convolutional Neural Network (QCNN) architecture with periodic boundary conditions. The circuit consists of alternating convolutional and pooling layers, where all gates within a given layer share the same trainable parameters. The final two qubits are measured in the computational basis to produce a probability vector over the four outcomes $\{00, 01, 10, 11\}$, which serves as the model's output for classification.

The specific implementation of the QCNN, including the number of qubits, layers, and the type of variational gates, was adapted for each set of experiments. A comprehensive summary of these configurations is provided in Table \ref{table:experiment_conf}. For most experiments, the convolutional and pooling layers were constructed from general two-qubit unitary gates. However, for the physics-inspired learning experiments (Sec.~\ref{sec:physics-inspired_CL}), the computational cost of calculating the $\mathfrak{g}$-purity necessitated the use of a more restricted ansatz based on matchgate circuits. Similarly, for the large-scale simulations, we employed a simplified gate ansatz with fewer parameters due to computational constraints. In particular, we utilized the matrix product state backend from \texttt{TensorCircuit}~\cite{zhang2023tensorcircuit} to simulate the large-scale quantum circuits. A bond dimension of $\chi = 20$ was employed
for the simulations of 16- and 32-qubit QCNNs.

\begin{figure}[h!]
    	\centering
    	\includegraphics[width=0.49\textwidth]{pacing_function.pdf}
    	\caption{\justifying
        The exponential pacing function used for a $600$ iterations training run. The function dictates the size of the data pool from which mini-batches are sampled at each epoch.}
    	\label{fig:pacing_function}
\end{figure}

\subsection{Training hyperparameters}
The model parameters were optimized using the Adam algorithm with a learning rate of $0.01$ and hyperparameters $\beta_1=0.9$, $\beta_2=0.999$, and $\epsilon=10^{-8}$. The parameters were initialized by sampling from a Gaussian distribution with a mean of $0$ and a standard deviation of $1/\sqrt{n}$, where $n$ is the number of qubits. All reported results are averaged over $10$ independent runs ($6$ for the large-scale simulations), each with a different random data sample and parameter initialization, to ensure statistical robustness.

\subsection{Curriculum learning implementation}
The curriculum is defined by a pacing function, which controls the size of the data pool accessible to the learner at each training epoch. We experimented with linear, logarithmic, and exponential pacing functions. The exponential function, illustrated in Figure~\ref{fig:pacing_function} as an example, consistently yielded the best performance and was therefore used for all reported curriculum learning results.

\begin{table*}[]
\centering
\begin{tabular}{|c|c|c|c|c|c|}
\hline
\textbf{Experiment} & \textbf{Qubits} & \textbf{2-qubit gate} & \textbf{Gate params.} & \textbf{Layers} & \textbf{Total params.} \\
\hline
Self-taught & 8 & General gates & 15 & 3 & 75 \\
\hline
Self-paced & 8 & General gates & 15 & 3 & 75 \\
\hline
Physics insp. & 8 & Match gates & 5 & 3 & 25 \\
\hline
Large-scale & 16 & Simplified & 8 & 4 & 61 \\
\hline
Large-scale & 32 & Simplified & 8 & 5 & 77 \\
\hline
\end{tabular}
\caption{Summary of the QCNN architecture and parameters for each numerical experiment.} \label{table:experiment_conf}
\end{table*}

The curriculum starts with an accessible pool of $10$ samples, which grows exponentially to include all $50$ training samples by the end of the training. It is important to note that each epoch, a mini-batch of $10$ samples was drawn uniformly at random from the \textit{currently accessible data pool}, not from the entire dataset.

\section{Complexity analysis and overhead of the self-paced strategy}\label{app:self-paced}
The self-paced learning strategy requires continuously re-scoring and re-ordering the training data based on the model's evolving hypothesis. While dynamically evaluating the loss over the entire dataset introduces a computational overhead, this cost must be directly contextualized against the dominant resource requirements of gradient estimation when executing on physical quantum hardware.

For a dataset of size $N$, scoring the data requires $N$ forward passes through the quantum circuit. Subsequently, sorting the scores incurs a classical computational cost of $\mathcal{O}(N\, \mathrm{log} N)$. In contrast, the optimization process on quantum hardware relies heavily on the parameter-shift rule for exact gradient calculation~\cite{parameterrule2019, wierichs2022general}. For a parameterized quantum circuit with $P$ trainable parameters and a mini-batch of size $L$, computing the gradient for a single optimization step requires $\mathcal{O}(P \times L)$ circuit executions. 

In our numerical implementations, re-scoring was performed at every optimization step. Even under this frequent update schedule, the $\mathcal{O}(P \times L)$ cost of gradient estimation vastly dominates the $N$ forward passes required for scoring, as $P$ is typically large and grows as models scale to achieve better expressivity. Furthermore, our framework is designed to support periodic re-scoring to optimize performance. By updating scores only every few iterations, the additional circuit execution budget can be reduced to a negligible fraction of the total training cost.

Finally, the classical sorting overhead of $\mathcal{O}(N\, \mathrm{log} N)$ remains negligible compared to the time required for quantum circuit executions. Therefore, the overhead introduced by the self-paced strategy represents only a small fraction of the total execution budget. By significantly improving the trainability and convergence rate of the model within a fixed number of steps, this method effectively amortizes its own scoring cost, proving to be a highly resource-efficient strategy for near-term quantum devices.

\section{Extended number of iterations in the self-taught strategy}\label{app:extended_nums}
To observe the asymptotic behavior of the self-taught methods, we extended the training to 1200 epochs for both the generalized cluster and the bond-alternating XXZ models on ten different independent runs. Figure~\ref{fig:self-taught_extended_it} illustrates the test accuracy over this extended training window. The vertical dashed line at epoch $600$ marks the end of the pacing function schedule.

\begin{figure}[t!]
\centering
             \includegraphics[width=0.48\textwidth]{ST_CL_incresed_it.pdf}
             \caption{\justifying
             \label{fig:self-taught_extended_it}Average accuracy on the test set as a function of the training epochs (extended to 1200 epochs) for the quantum phase recognition task for the \textbf{(a)} \emph{generalized cluster model} and \textbf{(b)} the \emph{bond-alternating XXZ model}, using a self-taught strategy. The vertical dashed line at epoch $600$ indicates the completion of the pacing function schedule, after which the model is trained on the full dataset. The shaded areas represent the standard error of the mean over ten runs with different random initialization and training data.}

    \vspace{1.5em}

    \begin{minipage}{\columnwidth}
        \centering
        \renewcommand{\arraystretch}{1.2}

        \begin{tabular}{|Sc|Sc|Sc||Sc|Sc|} \cline{2-5}
\multicolumn{1}{c|}{} & \multicolumn{2}{c||}{\textbf{Generalized cluster}} & \multicolumn{2}{c|}{\textbf{Bond-alternating XXZ}} \\ \cline{2-5}
\multicolumn{1}{c|}{} & \multicolumn{1}{c|}{\,\,Training $(\%)$\,\,} & \multicolumn{1}{c||}{\,\,Test $(\%)$\,\,} & \multicolumn{1}{c|}{\,\,Training $(\%)$\,\,} & \multicolumn{1}{c|}{\,\,Test $(\%)$\,\,} \\ \hline
\,\textit{Standard}\, & 79.2 & 76.6 & 79.0 & 76.6 \\ \hline
\,\textit{Random}\, & 82.8 & 81.9 & 81.4 & 79.5\\ \hline
\,\textit{Easy}\, & 81.8 & 80.8 & 78.4 & 77.7\\ \hline
\,\textit{Hard}\, & \textbf{85.9} & \textbf{85.0} & \textbf{82.1} & \textbf{81.7}\\
\hline\end{tabular}

        \captionsetup{justification=justified,singlelinecheck=false}
        \captionof{table}{\label{table:ST_CL_extended_it}
        \justifying
        Average best accuracy over 
ten runs for the quantum phase recognition task on the \emph{generalized cluster model} and the \emph{bond-alternating XXZ model}, using a self-taught strategy (extended to $1200$ epochs). The best-performing method prioritizes hard examples early in the training.}
    \end{minipage}
\end{figure}

It is important to emphasize that the core objective of these data-ordering strategies is not necessarily to reach asymptotic saturation over an arbitrarily large number of epochs. Rather, the primary goal is to improve model performance and accelerate convergence within a fixed, and often strictly limited, number of training steps. This is particularly critical in the context of near-term quantum hardware, where circuit executions are expensive and optimization budgets are highly constrained. Nevertheless, analyzing the performance after the completion of the pacing function schedules provides valuable insights into the stability of the optimization trajectory.

As shown in Fig.~\ref{fig:self-taught_extended_it}(a) for the generalized cluster model, the Hard strategy successfully steers the optimization process away from poor local minima and toward a more favorable region of the parameter space early in the training. Once the model is situated within this region, subsequent training on the complete dataset preserves this initial advantage, resulting in a higher stabilized final accuracy compared to the other orderings.

For the bond-alternating XXZ model in Fig.~\ref{fig:self-taught_extended_it}(b), the extended training clarifies the sharp accuracy drop observed near epoch $600$. This decrease in the final optimization steps directly coincides with the introduction of the easiest, least informative examples into the training batches. Intriguingly, this suggests that in certain quantum machine learning scenarios, exposure to non-informative data points might actively degrade the model's performance, and it might be beneficial to permanently remove the least informative samples to improve overall accuracy. Following this transient drop, the Hard strategy stabilizes alongside the other methods. Table~\ref{table:ST_CL_extended_it} summarizes the best test and training accuracies for all methods evaluated.

\end{document}